\pgfplotsset{compat=newest}
\newtheorem{theorem}{Theorem}
\newtheorem{claim}{Claim}
\newtheorem*{theorem*}{Theorem}
\crefname{lemma}{Lemma}{Lemmas}
\newcommand{\onotation}[1]{\(\mathcal{O} \left( {#1}  \right) \)}
\newcommand{\ona}[1]{\onotation{#1}}
\begin{document}

\title{No-Existence Of General Diffusion.}
\author{David Ponarovsky}
% \date{July 2021}
\maketitle

\begin{abstract}\textit{We show that given two arbitrary states $\ket{\psi},\ket{\phi}$ it is impossible to compute the transformation: $ \ket{\psi}\ket{\phi} \mapsto \ket{\psi}\left( \mathbb{I} - 2 \ket{\psi}\bra{\psi} \right)\ket{\phi} $ The contradiction of the existence of such operator follows by showing that using it, two players can compute the disjoints of their sets in a single round and $O\left( \sqrt{n} \right)$ communication complexity, which shown by Braverman to be impossible \cite{Braverman}. }
\end{abstract}
%\begin{multicols*}{2}

\section{Preamble} It's widely believed that quantum machines have a significant advantage over classical optimization tasks. Simple algorithms, which could be interpreted as the quantum version of ''scanning all the options'', cut the running time by the square root of the classical magnitude. That cut is achieved by using the superposition principle most straightforwardly known as the Amplitude Amplification algorithm \cite{Brassard_2002}, \cite{grover1996fast}. 

General speaking, this method transforms a known state $\ket{\psi}$  with probability $a$ to measure $\ket{i}$ to a state in which the desired measurement obtained with probability greater than $\frac{1}{2}$ at the cost of less than $\sqrt{a}$ Grover iterations. Using this process, One can initialize a uniform distribution over $n$ elements and amplify the probability to measure a desired state at $\sqrt{n}$ time. To understand the power gained by this method, we mention max extraction as a use-case \cite{ahuja1999quantum}. While any classical algorithm which runs at square root time scans at most $ \Theta( \sqrt{n} )$ elements and might miss the maximum with probability at least $ 1 - \Theta(1/\sqrt{n})$. Therefore can't yield a constant probability to sample the maximum element, Quntemly, this limitation doesn't hold. And the gap amplification indeed enables a square root time maximum extraction algorithm. 
  
  A critical requirement for that procedure is to have the ability to generate copies of the initial state, Formulated by \cite{Brassard_2002} as holding an algorithm $\mathcal{A}$, which does not make any measurements, such that $\mathcal{A}\ket{0}=\ket{\Psi}$. Assuming having this ability, one could mimic the scattering done in the Grover search but restrict himself to be supported on $\ket{\Psi}$.

  One question that might arise is whether the above amplification process can be done assuming nothing but having a single entity of the initial state. Both positive and negative answers will illuminate the fundamentals behind transferring probability weight. We partially answered that question by proving that the given copy alone cannot simulate the diffusion step. We formulate the above by the following theorem: 

  \begin{theorem} \textit{ There is no operator $D$ that for given two arbitrary states $\ket{\psi},\ket{\phi}$ compute the transformation:} 
\begin{equation*}
    D \ket{\psi}\ket{\phi} = \ket{\psi}\otimes\left( \mathbb{I} - 2 \ket{\psi}\bra{\psi} \right)\ket{\phi} 
\end{equation*}
\end{theorem}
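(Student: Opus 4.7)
My plan is to show that an operator $D$ satisfying the given identity would yield a single-round quantum protocol for Set-Disjointness with polylogarithmic communication, contradicting Braverman's single-round lower bound \cite{Braverman}.

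Concretely, let Alice hold $S_A \subseteq [n]$ and Bob hold $S_B \subseteq [n]$, each of size $\Theta(n)$, and suppose they wish to decide whether $S_A \cap S_B = \emptyset$. Alice encodes her set as $\ket{\psi_A} = \frac{1}{\sqrt{|S_A|}}\sum_{i \in S_A}\ket{i}$ and, in the single round, transmits a pair of copies $\ket{\psi_A} \otimes \ket{\psi_A}$ to Bob at cost $O(\log n)$ qubits. Bob then iterates the map $D \cdot (\mathbb{I} \otimes O_B)$ on these two registers, where $O_B\ket{i} = (-1)^{\mathbf{1}[i \in S_B]}\ket{i}$ is his local phase oracle. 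By the defining identity of $D$, the first register remains $\ket{\psi_A}$ after every application, so a single transmission suffices and Bob's iterations demand no further communication. On the second register the composed map acts as the standard Grover iterate with seed $\ket{\psi_A}$ and marked subspace $S_A \cap S_B$, so after $O(\sqrt n)$ steps amplitude amplification \cite{Brassard_2002} produces a marked outcome with constant probability whenever the intersection is non-empty. If the intersection is empty the state is fixed throughout and Bob measures an index in $S_A \setminus S_B$, which he rejects locally.

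The key invariant --- that the first register stays in $\ket{\psi_A}$ across all iterations, making a single message enough to drive an arbitrary number of Grover steps at Bob's side --- is the step I expect to require the most careful verification, together with checking that $D \cdot (\mathbb{I} \otimes O_B)$ really implements the correct Grover rotation on the second register (up to the usual global sign conventions). Once that invariant is in place, handling the unknown size of $S_A \cap S_B$ by standard exponential search or controlled-$D$ amplitude estimation incurs only polylogarithmic overhead, and the resulting one-round protocol contradicts \cite{Braverman}, ruling out the existence of $D$.
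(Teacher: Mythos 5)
Your reduction is correct and follows the same overall strategy as the paper: use $D$ to realize the reflection about the transmitted state, so that the receiving party can run amplitude amplification locally with no further communication, and derive a contradiction with Braverman's one-round lower bound for Disjointness. The genuinely different choice is the encoding. The paper splits each input into $\sqrt{n}$ blocks of $\sqrt{n}$ bits and transmits a uniform superposition of (block contents, block index), so the single message already costs $\Theta(\sqrt{n})$ qubits and the contradiction is $O(\sqrt{n})$ versus $\Omega(n)$ for one round; you instead transmit $\ket{\psi_A}=\frac{1}{\sqrt{|S_A|}}\sum_{i\in S_A}\ket{i}$ on $O(\log n)$ qubits and let Bob's local phase oracle $O_B$ do the marking, which is both simpler and quantitatively stronger (a one-round polylogarithmic protocol would even violate the round-independent $\Omega(\sqrt{n})$ bound of Razborov, not just the one-round bound). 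Your key invariant is sound: $\mathbb{I}\otimes O_B$ and $D$ each leave the first register in $\ket{\psi_A}$ and keep the joint state a product, so the defining identity of $D$ applies at every iteration, and $\left(\mathbb{I}-2\ket{\psi_A}\bra{\psi_A}\right)O_B$ equals the standard Grover iterate up to a global sign of $(-1)$ per step, which is harmless for the measurement statistics; moreover the working register stays supported on $S_A$, so Bob's local membership test in $S_B$ correctly certifies an element of $S_A\cap S_B$ and gives one-sided error when the intersection is empty. One bookkeeping point you (and the paper) should make explicit: the exponential-search routine for unknown $a$ restarts from a fresh copy of the initial state after each failed measurement, so Alice must ship $O(\log n)$ copies of $\ket{\psi_A}$ up front --- still a single round and still only $O(\log^2 n)$ qubits, so the contradiction stands.
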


We name the gate above the \textit{General Diffusion} gate. If such a gate existed, it could be used as the projection operator to simulate the amplitude amplification procedure. The contradiction of the existence follows by showing that using $D$, two players can compute the disjoints of their sets in a single round and $O\left( \sqrt{n} \right)$ communication complexity contradicts the fact that $r$-rounds two-party computation needs at least $\Omega\left( \frac{n}{r} \right)$ communication to compute disjoined (up to log factors) \cite{Braverman}.    

\paragraph{Quantum Communication Complexity Of Disjointness.}
Consider the following communication problem.
As inputs Alice gets an \(x\), and Bob gets a \(y\), where \(x, y \in \{0, 1\}^n \), and by exchanging information, they want to determine if there is an index \(k\) such that \(x_k = y_k = 1 \) or not. 
In other words, if \(x\) encodes the set \(A = \{k | x_k = 1\} \), and \(y\) encodes \(B = \{k | y_k = 1\}\), 
then Alice and Bob want to determine whether \( A \cap B \) is empty.

The classical randomized communication complexity of this problem is \ona{n} \cite{v003a011}.
Assuming Alice and Bob can exchange quantum messages, It is known that Alice and Bob can solve the task
correctly with probability greater than \(2/3\) by exchanging at most \ona{\sqrt{n}\log n } qubits %\ctt{add cation of the original solution}. 

\section{The Reduction.} 
Assume by way of contradiction the existence of $D$ defined above.  
Let \( x^{(j)} \) be the \(j\)-th \(\sqrt{n}\)-block of \(x\), e.g \(x^{(j)} = x_{j\sqrt{n}},x_{j\sqrt{n}+1}...,x_{(j+1)(\sqrt{n})-1}  \). And denote by \( \ket{\psi_x} \in \mathcal{H}_{2}^{\bigotimes \sqrt{n}} \bigotimes \mathcal{H}_{\sqrt{n}} \) the uniform superposition state over the \( x^{(j)}\)-'s "tensored" with \(\sqrt{n}\)-qudit (which will correspond to the block number). 
\[ \ket{\psi_x} = \frac{1}{n^\frac{1}{4}}\sum_{j}^{\sqrt{n}}\ket{x^{(j)}}\ket{j} \] Note that the encoding of \( \ket{\psi_x} \) require only \( \sqrt{n} + \log(\sqrt{n}) \) qubits.
Clearly, both Alice and Bob can generate the states \( \ket{\psi_x}, \ket{\psi_y} \), then Bob sends his share to Alice.
We know that there is a classical circuit with logarithmic depth in \( \sqrt{n} \) that act over the pure states \( \ket{x^{(j)}}\ket{j}, \ket{y^{(k)}}\ket{k} \) and decides whether \[ \left( j =  k \right) \ \bigwedge  \ \left( \bigvee_{i \in [ \sqrt{n} ] } x^{(j)}_{i} \ \wedge \  y^{(k)}_{i} \right)   \]

Denote it by \( C \) and by \( U \) the phase flip controlled by $C$ i.e. $U\ket{i}=\left( -1 \right)^{C\left( i \right)}\ket{i}$. 

The following claim argues that $D, U$ are sufficient for Alice to simulate a single iteration of the amplitude amplification. Since the technical details of the amplification procedure are not the focus of this paper, we only show equivalence without defining the operators, and the notation used by \cite{Brassard_2002}.   

\begin{claim} \textit{ Recall the operator $\mathbf Q  = - {\mathcal A}  {\mathbf S}_0 
  {\mathcal A}^{-1}  {\mathbf S}_\chi$ defined in \cite{Brassard_2002}, such that $ \mathcal A \ket{0} = \ket{\Psi} = \ket{\psi_{x}}\ket{\psi_{y}}$ and 
consider the generalize diffusion gate $D$, Denote by $\mathcal{H}_{\Psi}$ the space which is spanned by the $\ket{\Psi}$ support. Then it holds that for any state $ \ket{\phi} \in \mathcal{H}_{\Psi} $:}
\begin{equation*}
  \left(  \mathbb{I} \otimes \mathbf Q \right) \ket{\psi_{x}}\ket{\psi_{y}} \ket{\phi} =  - D \left( \mathbb{I} \otimes U \right)  \ket{\psi_{x}}\ket{\psi_{y}} \ket{\phi} 
\end{equation*}
\end{claim}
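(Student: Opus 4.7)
The plan is to reduce the claim to an unpacking of definitions: expand $\mathbf{Q}$ on the third register, recognize the inner "reflection about $\ket{\Psi}$" part, identify $\mathbf{S}_{\chi}$ with the phase-flip $U$, and then invoke the defining action of $D$ once. The entire content of the claim is that when a second, untouched copy of $\ket{\Psi}$ is present, $D$ implements precisely the reflection $\mathbb{I}-2\ket{\Psi}\bra{\Psi}$ on the third register, which is the only nontrivial ingredient of $\mathbf{Q}$ that requires knowledge of $\ket{\Psi}$.

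First, I would spell out $\mathbf{Q}=-\mathcal{A}\,\mathbf{S}_{0}\,\mathcal{A}^{-1}\,\mathbf{S}_{\chi}$ acting on the third register. Since $\mathcal{A}\ket{0}=\ket{\Psi}=\ket{\psi_x}\ket{\psi_y}$ and $\mathbf{S}_{0}=\mathbb{I}-2\ket{0}\bra{0}$, conjugation by $\mathcal{A}$ gives $\mathcal{A}\,\mathbf{S}_{0}\,\mathcal{A}^{-1}=\mathbb{I}-2\ket{\Psi}\bra{\Psi}$. Next I would argue that $\mathbf{S}_{\chi}$, the phase flip on "good" basis elements of $\mathcal{H}_{\Psi}$, coincides with the operator $U$ built from the classical circuit $C$: by construction $C$ flags exactly the basis states $\ket{x^{(j)}}\ket{j}\ket{y^{(k)}}\ket{k}$ for which $j=k$ and there is some $i$ with $x^{(j)}_i=y^{(k)}_i=1$, i.e. the good subspace of $\mathcal{H}_{\Psi}$. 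Putting these together,
\begin{equation*}
    \left(\mathbb{I}\otimes \mathbf{Q}\right)\ket{\psi_x}\ket{\psi_y}\ket{\phi}
    \;=\;-\,\ket{\psi_x}\ket{\psi_y}\otimes \bigl(\mathbb{I}-2\ket{\Psi}\bra{\Psi}\bigr)U\ket{\phi}.
\end{equation*}

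On the other side of the claimed identity, I apply $\mathbb{I}\otimes U$ first to obtain $\ket{\psi_x}\ket{\psi_y}\,U\ket{\phi}$, then invoke the defining equation of $D$ with $\ket{\psi}\leftarrow \ket{\Psi}$ on the first two registers and with the new "input" on the third register being $U\ket{\phi}\in \mathcal{H}_{\Psi}$. This yields
\begin{equation*}
    -D(\mathbb{I}\otimes U)\ket{\psi_x}\ket{\psi_y}\ket{\phi}
    \;=\;-\,\ket{\psi_x}\ket{\psi_y}\otimes\bigl(\mathbb{I}-2\ket{\Psi}\bra{\Psi}\bigr)U\ket{\phi},
\end{equation*}
which matches the previous expression and closes the claim.

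The only real subtlety, and so the step I would scrutinize, is the tensor-factor bookkeeping: $\ket{\phi}$ lives in its own register that is a copy of the Hilbert space in which $\ket{\Psi}$ resides, so one must be careful that the reflection $\mathbb{I}-2\ket{\Psi}\bra{\Psi}$ produced by $D$ on the third register is exactly what $\mathcal{A}\mathbf{S}_{0}\mathcal{A}^{-1}$ produces there. The hypothesis $\ket{\phi}\in \mathcal{H}_{\Psi}$ is what makes $U$ well-defined on that register and lets us identify it with $\mathbf{S}_{\chi}$. No nontrivial estimation is needed; the claim is purely operatorial.
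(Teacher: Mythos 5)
Your proof is correct, but it takes a genuinely different (and cleaner) route than the paper. You argue at the operator level: conjugation gives $\mathcal{A}\,\mathbf{S}_0\,\mathcal{A}^{-1}=\mathbb{I}-2\ket{\Psi}\bra{\Psi}$, the phase oracle $\mathbf{S}_\chi$ is identified with $U$ on the relevant register, and a single invocation of the defining equation of $D$ (with $\ket{\psi}\leftarrow\ket{\Psi}$ and input $U\ket{\phi}$) makes both sides literally equal to $-\ket{\psi_x}\ket{\psi_y}\otimes(\mathbb{I}-2\ket{\Psi}\bra{\Psi})U\ket{\phi}$. The paper instead fixes the eigenbasis $\{\ket{\Psi_0},\ket{\Psi_1}\}$ of $U$ inside $\mathcal{H}_\Psi$, treats the degenerate one-dimensional cases separately, and then expands $-D(\mathbb{I}\otimes U)$ coordinate-wise on $\ket{\Psi_0}$ and $\ket{\Psi_1}$ in terms of the overlap $a=|\braket{\Psi_1|\Psi}|^2$; notably, the paper's write-up only computes the right-hand side and leaves the comparison with $\mathbb{I}\otimes\mathbf{Q}$ implicit (and its cross terms read $-2a$ where one would expect $-2\sqrt{a(1-a)}$). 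Your version buys a complete, case-free verification of the identity on all of $\mathcal{H}_\Psi$ at once by linearity, at the small cost of having to explicitly recall the definitions of $\mathbf{S}_0$ and $\mathbf{S}_\chi$ from \cite{Brassard_2002}, which the paper deliberately avoids spelling out; the one point worth stating precisely in your argument is that $U$ and $\mathbf{S}_\chi$ agree because $C$ flags exactly the computational basis states that $\chi$ marks, which you do address.
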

\begin{proof} Let $\ket{\Psi_0}, \ket{\Psi_1}$ be the base which span $ \mathcal{H}_{\Psi}$ and in addition $U\ket{\Psi_0} = \ket{\Psi_0}, U\ket{\Psi_1} =- \ket{\Psi_1}$.

First consider the case in which the dimension of $ \mathcal{H}_{\Psi}$ is exactly 1, If $ \ket{\Psi} $ supported only on non-satisfying states (i.e $\ket{\Psi} = \ket{\Psi_{0}}) $ then it's clear that $ I \otimes U $ act over the $ \ket{\Psi}\ket{\Psi} $ as identity and therefore $ -D\left( I \otimes U \right) $ act also as identity: 
\begin{equation*}
  -D\left( I \otimes U \right) \ket{\Psi}\ket{\Psi} = -\ket{\Psi}\left( I - 2\ket{\Psi}\bra{\Psi}  \right) \ket{\Psi} = \ket{\Psi}\ket{\Psi}
\end{equation*}
Similar calculation yields that the action is trivial also when  $ \mathcal{H}_{\Psi}$  supported only over $ \ket{\Psi_1} $.  

\paragraph{}

It is left to show the equivalence when $\ket{\Psi}$ supported both over $\ket{\Psi_0}$ and $\ket{\Psi_1}$. Then it follows that:

    \begin{equation*}
      \begin{split}
    - D & \left( \mathbb{I} \otimes U \right)  \ket{\psi_{x}}\ket{\psi_{y}} \ket{\Psi_1} =    D  \ket{\psi_{x}}\ket{\psi_{y}} \ket{\Psi_1} \\
  & = \ket{\psi_{x}}\ket{\psi_{y}} \left( \mathbb{I} - 2 \ket{\psi_{x}}\ket{\psi_{y}} \bra{ \psi_{x}}\bra{\psi_{y}} \right) \ket{\Psi_1} \\
  & =  \ket{\psi_{x}}\ket{\psi_{y}} \left( \mathbb{I} - 2 \ket{\Psi} \bra{\Psi} \right) \ket{\Psi_1}  \\ 
  & = \ket{\psi_{x}}\ket{\psi_{y}} \left( \left( 1 - 2a  \right)\ket{\Psi_1} - 2a \ket{\Psi_0} \right) \\ 
  & \\ 
  - D & \left( \mathbb{I} \otimes U \right)  \ket{\psi_{x}}\ket{\psi_{y}} \ket{\Psi_0} =   - D  \ket{\psi_{x}}\ket{\psi_{y}} \ket{\Psi_0} \\
  & = - \ket{\psi_{x}}\ket{\psi_{y}} \left( \mathbb{I} - 2 \ket{\psi_{x}}\ket{\psi_{y}} \bra{ \psi_{x}}\bra{\psi_{y}} \right) \ket{\Psi_0} \\
  & = - \ket{\psi_{x}}\ket{\psi_{y}} \left( \mathbb{I} - 2 \ket{\Psi} \bra{\Psi} \right) \ket{\Psi_0} \\ 
  & = - \ket{\psi_{x}}\ket{\psi_{y}} \left( \left(  - (2-2a)  \right)\ket{\Psi_1} + 1 -(2 - 2a) \ket{\Psi_0} \right) \\ 
  & = \ket{\psi_{x}}\ket{\psi_{y}} \left( \left( 2-2a \right)\ket{\Psi_1} + \left(1  - 2a\right) \ket{\Psi_0} \right) 
\end{split}
\end{equation*}
\end{proof}
% which is operator in Hilbert state at dimension: \[ \mathcal{O} \left( \left( \dim \left( \mathcal{H}_{2}^{\bigotimes \sqrt{n}} \bigotimes \mathcal{H}_{\sqrt{n}} \right) \right) ^2 o(1) \right) = \mathcal{O} \left( 2^{2\sqrt{n}}  \right)  \]
Now, it's clear that Alice could simulate the \textbf{algqsearch} algorithm \cite{Brassard_2002}, 

\paragraph{Theorem 3.} \textit{Quadratic speedup without knowing $\mathbf{a}$
There exists a quantum algorithm \textbf{algqsearch} with the following property.
Let $\mathcal A$ be any quantum algorithm that uses no measurements,
and let $\chi: \mathbb{N}  \rightarrow \{0,1\}$ be any Boolean function.
Let $a$ denote the initial success probability of~$\mathcal A$.
Algorithm \textbf{algqsearch} finds a good solution using an expected number
of applications of $\mathcal A$ and ${\mathcal A}^{-1}$ which are in
$\Theta(\sqrt a)$ if $a>0$, and otherwise runs forever.}

\paragraph{} 

\begin{proof}[Proof of Theorem 1]
Suppose that \( A \cap B \neq \emptyset \) then, the support of \( \ket{\psi_x} \otimes \ket{\psi_y} \) contain a state \( \ket{\phi} \) which satisfies \(C\), or in other words $a = |\braket{\Psi_1|\Psi}|^2 > 0 $ and therefore by \textit{Theorem 3} there is an explicit procedure which takes a $\Theta(\sqrt{a})$ time in expectation, Hence for any \(\varepsilon >0\) we could construct a finite algorithm that fails with probability less than $ \varepsilon $ by rejecting runs that last longer than $\frac{1}{\varepsilon}$. 
  
On the other hand, Consider the case when \(A \cap B = \emptyset\) then $\Rightarrow a = 0 \Rightarrow \mathcal{H}_{\Psi}$ is 1-dimension space spanned only by $\ket{\Psi_0} $, and the operator $ I - 2 \ket{\Psi}\bra{\Psi} $ act over the $ \ket{\Psi_0}  $ as identity and therefore after executing any number of iterations the probability to measure from $\ket{\Psi_0}$ will remain $1$.

\paragraph{}Summarize the above yields the following protocol,
\begin{enumerate}
    \item Bob create \( \ket{\psi_x} \) and send it to Alice.
    \item Alice simulate \textbf{algqsearch} either the algorithm accept or either $n^4$ turns were passed.     
    \item If the algorithm accepts, Alice returns True; otherwise, Alice returns False. 
\end{enumerate}

The protocol computes the disjointness in a single round while requiring transmission of less than $\Theta\left( \sqrt{n} \right)$ qubits. That is in contrast to the known lower bound proved by Braverman \cite{Braverman}: 
\begin{theorem*}[Theorem A] The $r$-round quantum communication complexity of Disjointness$_n$ is $ \Omega\left( \frac{n}{r \log^8 r} \right)$.
\end{theorem*}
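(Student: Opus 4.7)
The plan is to prove Braverman's $r$-round lower bound for Disjointness by lower-bounding the \emph{quantum information cost} of $r$-round protocols, following the template of Braverman--Garg--Ko--Mao--Touchette. I would organize the argument into three stages.

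Stage one is setup. I would invoke Touchette's round-limited quantum information cost $\mathrm{QIC}^{[r]}$, defined on a purified protocol via the mutual information between the messages and the inputs conditioned on the other party's purification, and verify two structural properties: (i) for any distribution $\mu$ on inputs, $\mathrm{QIC}^{[r]}_\mu(\Pi) \le \mathrm{QCC}^{[r]}(\Pi)$, with equality up to a polylogarithmic slack after one applies a quantum message-compression scheme in the style of Jain--Radhakrishnan--Sen; and (ii) super-additivity of $\mathrm{QIC}$ under product distributions. Together these reduce the theorem to a per-coordinate information lower bound against $r$-round quantum protocols.

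Stage two is the direct-sum reduction to a two-bit primitive. I would choose a hard distribution $\nu$ on $\{0,1\}^2$ supported mostly on the ``easy'' pairs $(0,0),(0,1),(1,0)$ and with a small mass on $(1,1)$, then embed $n$ independent draws of $\nu$ into Disjointness so that any protocol that solves $\mathsf{DISJ}_n$ on $\nu^{\otimes n}$ with constant error must, by an averaging argument on a uniformly chosen coordinate, distinguish $(1,1)$ from the easy pairs on that coordinate. Via additivity of $\mathrm{QIC}$, it then suffices to show that any $r$-round quantum protocol achieving constant distinguishing advantage between $(1,1)$ and the mixture of easy pairs under $\nu$ has $\mathrm{QIC}^{[r]}$ at least $\Omega\!\left(1/(r\log^{O(1)} r)\right)$.

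Stage three is the round-limited information lower bound for the two-bit primitive, which is the technical heart. The idea is a quantum analog of the classical Hellinger / cut-and-paste argument of Braverman--Weinstein: I would track, across rounds, the trace distance between Bob's reduced state conditioned on $y=1$ versus $y=0$, and bound its increase per round by a quantum Pinsker-type inequality in terms of the mutual information the round leaks. Summing over $r$ rounds and insisting on constant final distinguishability forces $\sum_k I_k \gtrsim 1/r$, giving the desired per-coordinate bound. Combining with stages one and two gives total quantum communication $\Omega(n/(r\log^{O(1)} r))$, with the exponent $8$ on the logarithm arising from tallying compression overhead and the constants in the quantum cut-and-paste step.

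The main obstacle is stage three. Classical cut-and-paste and the Hellinger-distance machinery do not transfer verbatim, because the players can share entanglement, their messages are purified quantum registers, and conditioning on a classical transcript is replaced by conditioning on a quantum register that evolves in time. A correct quantum cut-and-paste requires carefully coupling the purifying registers of the two ``hypothesis'' protocols round by round, and using quantum Pinsker together with Uhlmann's theorem in place of the classical chain rule, without losing extra factors of $r$ to triangle inequalities. The secondary difficulty is bookkeeping the polylogarithmic compression overhead so that it aggregates to $\log^8 r$ rather than a worse power, which is what dictates the precise statement of Theorem~A.
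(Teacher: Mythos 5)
The paper does not prove Theorem~A at all: it is quoted verbatim from \cite{Braverman} and used as a black box to derive the contradiction, so there is no internal proof to compare your attempt against. What you have written is a reasonable reconstruction of the architecture of the cited result (quantum information cost, a direct-sum reduction to a two-bit AND-type primitive, and a per-coordinate round-limited information lower bound), but as it stands it is an outline, not a proof. The entire content of the theorem lives in your stage three, which you yourself flag as the unresolved obstacle: without an actual quantum analogue of the cut-and-paste/Hellinger argument --- one that correctly handles shared entanglement, purified message registers, and conditioning on quantum rather than classical transcripts --- the per-coordinate bound $\Omega\bigl(1/(r\log^{O(1)}r)\bigr)$ is simply asserted, and nothing is proved. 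The known resolution of this step is genuinely delicate and is the main technical contribution of the cited work; gesturing at ``quantum Pinsker plus Uhlmann'' does not substitute for it.

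Two smaller issues. First, in stage one you only need the easy direction $\mathrm{QIC}^{[r]}_\mu(\Pi)\le \mathrm{QCC}^{[r]}(\Pi)$ for a lower bound; the compression-based near-converse is irrelevant to Theorem~A, and attributing the $\log^8 r$ loss to ``compression overhead'' misidentifies where that factor actually arises (it comes out of the per-coordinate information bound itself). Second, the additivity/direct-sum property of $\mathrm{QIC}$ under product distributions is itself a nontrivial theorem requiring a specific conditioning structure and a careful embedding of the hard distribution; stating it as a bullet to ``verify'' hides a substantial piece of the argument. In short: right roadmap, but the destination is not reached, and the paper you are working in never attempts the journey --- it cites it.
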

\paragraph{}
\end{proof}

\paragraph{Conclusion And Open Problems.} The reduction above demonstrate how known results can give us almost immediate insights into quantum compatibility. Besides being a no-go-to proof, we hope this work will also use as a hint for direction to other quantum advantages in the disturbed computing setting. 

It's worth saying that the $r$-rounds communication bound on disjointness does not hold in many cases. For a simple example, consider that each set $x,y \in \{0,1\}^{n}$ is drawn uniformly. Then it's clear that Alice and Bob could answer ''Yes'' and they will be correct with high probability. So the family of states, which one can project over them by only partly projection (diffusion operators), correspond to the distributions over pairs of Alice and Bob sets, which they can compute with their disjointness with less communication.        
\printbibliography 
%\end{multicols*}
\end{document}